\DeclareMathOperator*{\argmin}{arg\,min}
\newtheorem*{Thm*}{Theorem}{\bfseries}{\itshape}
\newtheorem{Cor}{Corollary}{\bfseries}{\itshape}
\newtheorem{Prop}[Cor]{Proposition}{\bfseries}{\itshape}
{\bfseries}{\itshape}
\newtheorem*{Lem*}{Lemma}{\bfseries}{\itshape}
{\bfseries}{\itshape}
{\bfseries}{\itshape}
\newtheorem{Def}[Cor]{Definition}{\bfseries}{\rmfamily}
{\scshape}{\rmfamily}
\newtheorem{Rem}[Cor]{Remark}{\scshape}{\rmfamily}
{\bfseries}{\itshape}
\newcommand{\R}{\mathbb{R}}
\title{Calculation of sample size guaranteeing the required width of the empirical confidence interval with predefined probability.}
\author{Ilya Novikov. \\
  Biostatistical and Biomathematical unit \\
  Gertner Institute for Epidemiology \\
  and Health Policy Research \\
  Ramat Gan, Israel \\
  ilian@gertner.health.gov.il }
\begin{document}

\maketitle
\newpage
\begin{abstract}
The goal of any estimation study is an interval estimation of a the parameter(s) of interest. These estimations are mostly expressed using empirical confidence intervals that are based on sample point estimates of the corresponding parameter(s). In contrast, calculations of the necessary sample size usually use expected confidence intervals that are based on the expected value of the parameter(s). The approach that guarantees the required probability of the required width of empirical confidence interval is known at least since 1989. However, till now, this approach is not implemented for most software and is not even described in many modern papers and textbooks. Here we present the concise description of the approach to sample size calculation for obtaining empirical confidence interval of the required width with the predefined probability and give a framework of its general implementation. We illustrate the approach in Normal, Poisson, and Binomial distributions. The numeric results showed that the sample size necessary to obtain the required width of empirical confidence interval with the standard probability of $0.8$ or $0.9$ may be more than 20\% larger than the sample size calculated for the expected values of the parameters. 
\end{abstract}

\section{Notations}\label{sec:notations}
$A$ = scalar parameter of interest\\
$B$ = other parameters defining "theoretical" distribution corresponding current design\\
$a$ = assumed "true" value of A \\
$b$ = vector of the assumed values of B\\
$\alpha$ = required significance level\\
$CI= CI(\alpha) = (1-\alpha)$ confidence interval for $ a $\\
$h$ = a procedure for defining $d(CI)$\\
$C$ = full set of statistics for $h$\\
$ \Psi(C) $ - distribution of C ||
$d=d(CI)$ = width of $CI$ as defined by $h$ \\
$G(d)$ = distribution of $d(CI)$\\
$d_0$ = required width of $CI$\\
$\psi $ = probability of $d(CI)<d_0$ (analogue of power)\\
$\psi_0 $ = required probability of $d(CI)<d_0$ \\
$n_0$ =minimum sample size providing $ \psi\ge\psi_0 $ \\ 

\section{Introduction}\label{sec:intro}

	Each study is (or should be) carefully designed. The design should specify the goals of the study, measured parameters, method of measurement, sampling scheme, analysis, and at last, sample size  $N$. For estimation  studies, the goal is to obtain an estimation of the parameter of interest with a given precision. The precision is usually estimated by width of an empirical Confidence Interval ($CI$) that uses the estimation(s) of the parameter(s) from the study. The design should specify all details of the confidence interval: type (symmetric, the shortest, mid-point, etc.), one-sided or two-sided, and a confidence level $1- \alpha$ (usually $\alpha$ is $0.1$, $0.05$, or $0.01$). The width of an empirical $CI$ depends on the assumed distribution of the parameter of interest, and on the sample estimates of the parameters. These estimates are never known at the stage of design. However, a researcher assume some "true" values of the parameters using information from similar studies. This gives full definition of distribution function. 
	Many software tools, like PASS \cite{PASS}, MINITAB \cite{Minitab}, STATA \cite{STATA}, WinPepi \cite{WinPepi},  SURVEYSYSTEM \cite{SurveySystem},  QUATRICS \cite{QUATRICS}, R packages binomSamSize \cite{binomSamSize},  samplingbook \cite{samplingbook} and others use the assumed values instead of future estimates of the parameter(s) by substituting them in the corresponding formula for width of empiric $CI$as it is recommended in many popular text-books Yau \cite{Yau}, Machin \cite{Machin}, Lemeshow \cite{Lemeshow}. 

These procedures use only the confidence level, the desired width of corresponding confidence interval and expected values of the parameter(s) and their SDs (some of the procedures also permit finite population correction.)  The resulting sample size looks like it guarantees that the $CI$ that will be obtained in the study  will be definitely as small as required. Clearly, this is logically wrong because the empirical $CI$ is based not on the assumed values of the parameters but on their sample estimates. Thus the width of the empirical $CI$ is random. It differs from the assumed one and may be narrower or wider than the assumed value. Therefore the required precision may be not achieved.The correct approach is known for at least for 30 years (Greenland\cite{Greenland},Bristol\cite{Bristol},Beal\cite{Beal},Moore\cite{Moore},Grieve\cite{Grieve} ) but evidently much earlier. In 2003 it was revived by Jirotek et al.\cite{Jirotek} and especially by Kelley and Maxwell \cite{Kelley2003a} under the name AIPE (Accuracy In Parameter Estimation). Nevertheless, the program realization in commercial software are rare. The happy exception is SAS Proc POWER\cite{SAS}.  It defines the goal as "computing the probability of achieving the desired precision of a confidence interval, or the sample size required to ensure this probability" \cite[Proc POWER]{SAS}. In explaining the analysis of Confidence Interval in Proc POWER,  SAS states: "An analysis of confidence interval precision is analogous to a traditional power analysis, with $CI$ Half-Width taking the place of effect size and Prob(Width) taking the place of power". Unfortunately, in the current version 9.4, Proc POWER calculates the necessary sample size only for means, but not even for proportion and  other parameters \cite[STAT 14.1]{SAS}.In R the package MBESS realized AIPE\cite. these ideas were revivedAlso R package 
	In this text we describe the general approach to sample size evaluation for obtaining the width of the empirical $CI$ with a predefined probability, describe its implementation in R, and give examples for Normal, Poisson and Binomial distributions. 
	
\section{Approach}

We are looking for an estimate of a true value $a$ of a scalar parameter $A$ of a distribution $F(A,B)$, where $B$ is a set of additional parameters. The precision of the estimate is measured by the width $d$ of the confidence interval $CI=CI(\alpha)$ with confidence level $(1-\alpha) $. The $CI(\alpha)$ should be estimated using a sample $S$ of $N(S)$ observations from distribution $F(A,B)$. Let $h$ be a procedure for calculating the width $ d $ of $CI(\alpha)$ using the sample $S$. Usually there are several procedures to select from. For example, we can choose between several formula for $CI$ for a parameter $p$ of Binomial distribution $Bin(N,p)$: Clopper--Pearson, Normal, Wilson etc. To calculate the width of $CI$, procedure $h$ uses some statistics, like sample mean, sample standard deviation etc. Let $C\in \R^k$ be a set of these statistics.  Given $h(C)$ and fixing $N=N(S)$, we obtain the width $d=d(N,C)$ as a statistic with some distribution $G(d)$ .

For a given precision $d_0$ and probability $\psi_0$, we aim to find a sample size $n_0$ such that $d(N,C)\le d_0$ for any $N>n_0$ with probability $\psi_0$ or higher. Parameter $\psi_0$ plays a role similar to power in hypotheses testing.

  Let $N$ be given.  
\begin{Def}
	For a subset $R\subset\R^k$ define
	\begin{equation}\label{eq:def psi(R)}
	\psi_N(R)=Prob\left[C(S) \in R|\#S=N\right].
	\end{equation}
	
	Define $d_{min}(N)$ as 
	\begin{equation}\label{def:dmin}
	d_{min}(N)=\min\left\{d| \psi_N\left(\{d(N,C)\le d\}\right)\ge \psi_0 \right\}.
	\end{equation}
\end{Def}  

Evidently, it is enough to find $ n_0 $ such that $ d_{min}(N)<d_0 $ for all $ N>n_0 $.
\begin{Prop}
Let $ n_0 $ such that $ d_{min}(N)<d_0 $ for all $ N>n_0 $. Then $d(N,C)\le d_0$ for any $N>n_0$ with probability $\psi_0$ or higher.
\end{Prop}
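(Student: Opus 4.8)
The plan is to unwind the two definitions and reduce the statement to monotonicity of probability, together with the fact that the minimum in \eqref{def:dmin} is genuinely attained. First I would fix an arbitrary $N>n_0$ and read the quantity $\psi_N(\{d(N,C)\le t\})$ as a function of the real threshold $t$; call it $G_N(t)$. By the definition of $\psi_N(R)$ this is exactly $\mathrm{Prob}\left[\,d(N,C(S))\le t \mid \#S=N\,\right]$, i.e. the cumulative distribution function of the random width $d(N,C)$ under sampling of size $N$. In particular $G_N$ is non-decreasing and right-continuous.

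The one preliminary point that needs care is that the infimum defining $d_{min}(N)$ in \eqref{def:dmin} is actually attained, so that it is legitimately a minimum and satisfies $G_N(d_{min}(N))\ge\psi_0$. Because $G_N$ is non-decreasing, the set $\{t: G_N(t)\ge\psi_0\}$ is an up-set, hence a half-line $[q,\infty)$ with $q=d_{min}(N)$. Right-continuity of $G_N$ then yields $G_N(q)=\lim_{t\downarrow q}G_N(t)\ge\psi_0$, so the boundary point $q$ itself lies in the set and the minimum is indeed achieved. This is the only place where a genuine (if standard) argument is required; I expect it to be the main obstacle only in the sense that everything else is purely formal, and one must not overlook the appeal to right-continuity of a CDF when passing from an infimum to a minimum.

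With that in hand the conclusion follows by monotonicity. For the fixed $N>n_0$ the hypothesis gives $d_{min}(N)<d_0$, so as subsets of $\R^k$ we have the inclusion
\begin{equation*}
\{C: d(N,C)\le d_{min}(N)\}\ \subseteq\ \{C: d(N,C)\le d_0\}.
\end{equation*}
Applying $\psi_N$, which is monotone under set inclusion since it is a probability, and using the previous step,
\begin{equation*}
\psi_N\bigl(\{d(N,C)\le d_0\}\bigr)\ \ge\ \psi_N\bigl(\{d(N,C)\le d_{min}(N)\}\bigr)\ =\ G_N\bigl(d_{min}(N)\bigr)\ \ge\ \psi_0 .
\end{equation*}
Since the left-hand side equals $\mathrm{Prob}\left[\,d(N,C)\le d_0 \mid \#S=N\,\right]$, this says precisely that $d(N,C)\le d_0$ holds with probability at least $\psi_0$ for this $N$. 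As $N>n_0$ was arbitrary, the statement holds for every $N>n_0$, completing the argument.
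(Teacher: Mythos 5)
Your proof is correct and follows essentially the same route as the paper's: both reduce the claim to monotonicity of probability under the inclusion $\{d(N,C)\le d_{min}(N)\}\subseteq\{d(N,C)\le d_0\}$ and then invoke the defining property of $d_{min}(N)$. The one addition you make — verifying via right-continuity of the CDF that the minimum in \eqref{def:dmin} is actually attained, so that $\psi_N(\{d(N,C)\le d_{min}(N)\})\ge\psi_0$ genuinely holds — is a point the paper's proof silently assumes, and it is a worthwhile (if standard) clarification rather than a different approach.
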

\begin{proof}
	Indeed, let $N>n_0$. Then $d_{min}(N)<d_0$, and, therefore the probability that $d(N,C)\le d_0$ is at least the probability that $d(N,C)\le d_{min}(N)$. By definition of $d_{min}(N)$, the latter  is at least $\psi_0$.
\end{proof}

\subsubsection{The set $R_0(N)$}\label{sssec:def of R0}
The main difficulty consists of finding $d_{min}(N)$ in \eqref{def:dmin}. 
We solve it by considering explicitly the corresponding set $R_0(N)$.
\begin{Def}\label{def:def R0} 
	We define $R_0(N)$ as 
	\begin{equation}\label{def:def2 R_0}
	R_0(N)=\{d(N,C)\le d_{min}(N)\}.
	\end{equation}
\end{Def}

In other words, first, $R_0(N)$ must be "big" in a sense that the probability to obtain a result in $R_0(N)$ should be high (above $ \psi_0 $), and, simultaneously, "greedy" in a sense that 
for any point outside of $R_0(N)$, the width of an empiric $CI(C,N)$ is larger than the maximum of the width for any point in $R_0(N)$.

\begin{Rem}
In Definition~\ref{def:def R0} the statistic $C$ can be multidimensional, $C\in \R^k$, and, correspondingly, the set $R_0(N)$ will be a subset of $\R^k$.

If the statistic $C$ has the form $C=(c,B)$, where $c$ is a scalar and $B$ is the set of additional statistics, then one can consider a majorant \begin{equation}\label{eq:d*min}
d^*_{min}(N)=\min\left\{d| \psi_N\left(\{C|\max_B \left(d(N,(c,B))\right)\le d\}\right)\ge \psi_0 \right\}.
\end{equation}

In other words, consider the function $d^*(N,c)=\max_B {d(N,(c,B))}$. Then, considering $d^*(N,c)$ as a function on $\R^k$, we get 
$$
\{C|d(N,(c,B))\le d\}\subset\{C|d^*(N,c)\le d\},
$$
so $d_{min}(N)\le d^*_{min}(N)$. 

Alternatively, one can define $d^*_{min}(N)$ by using one-dimensional settings, with $d^*(N,c)$ (considered as univariate function) and  the marginal distribution $\tilde{\psi}_N(c)$  of $\psi_N(c,B)$.
\end{Rem}


In most practically important cases, the statistic $C$ is a scalar and the function $d(N,C)$ has a simple structure, e.g. monotone, bell-shaped etc., and the set $R_0(N)$  is an interval (for monotone $d(N,C)$) or two intervals (for bell-shaped case).

The described approach eliminates contradictions in the logic of calculating the sample size between hypotheses testing and estimation studies. The probability $\psi_0$ plays the role analogous to the power in hypothesis testing studies. 

Here, we address two immediate questions. First, we show how to implement it. Second, we compare between the sample sizes calculated by the proposed approach and the ones calculated by the current approaches.

\section{Implementation.} 
The implementation is simple. The common structure of a procedure for any distribution and any sampling is straightforward.
 
	1. Find initial estimate of $N$ . For many cases the good choice for starting value of $N$ is the sample size corresponding to the required width and the hypothesized value(s) of the parameter(s).
	 
	2. Find subset $R_0(N)$  for the estimated sample size N.
	
	3. Find $d(N)=\max_{C\in R_0(N)}(d(C,N))$ 
	
	4. If  $ d(N) > d_0 $ increase $N$ and repeat steps 2 and 3.
	
	5. Put $n_0$ equal to the last value of $ N $

\section{Examples.}
\subsection{Normal distribution.}
	Standard expression of CI for expectation $\mu $ of the Normal distribution $\textit{N}(\mu,\sigma)$ using I.i.d. sample of size $N$ leads to the following expression for width of the empiric confidence interval 	
	 \begin{equation}label{eq:def widthnorm}
	 d=2s \frac{t(1-\alpha/2,(N-1))}{\sqrt{N}} 
	\end{equation} 
								
	where $ t(1-\alpha/2, (N-1))  $ is the $ 1-\alpha/2 $ quantile of Student $ t $ distribution with $ (N-1)$  degrees of freedom, $ s $ is an estimated standard deviation, sample variance $ s^2 $ is defined as $ Q/N $, where $ Q=\Sigma(x_i- m)^2 $ , and $ m $ is the sample mean. The sum of squares  $ Q $ is distributed as $ \sigma^2\chi^2(N-1) $ where $ \sigma^2 $ is the assumed variance and $ \chi^2(N-1) $ is a chi-sqaure distribution with $ (N-1) $ degrees of freedom. The variance of the sample mean $ m $  is $ s^2/N $. 
	
	Thus  $ d $ is a monotone increasing function of  $ s $. 
	To be sure with probability $ \psi $ that the empirical interval will be shorter than $ d_0 $ for a fixed sample size $ N $ we should use $ s_0 = \sigma\sqrt{\chi^2(\psi,N) } $ , where $ \chi^2(\psi,N) $ is the $ \psi $ quantile of the  $\chi^2(N-1)) $ distribution. 
	  The given width $ d_0 $ will be guaranteed with the probability $ \psi $ if 
	\begin{equation}label{eq:def dnorm1}
	Prob\Big( 2s \frac{t(1-\alpha/2,N-1)}{\sqrt{N}} < d_0 \Big) \geq \psi,
	\end{equation}
	i.e.
	\begin{equation}label{eq:def dnorm2}
	Prob\Big(s < \frac{d_0\sqrt{N}}{2t(1-\alpha/2, N-1)} \Big)  \geq \psi
	\end{equation}
	and 
	\begin{equation}label{eq:def dnorm3}
	Prob\Big(s^2 < \frac{Nd_0^2}{4t^2(1-\alpha/2, N-1)} \Big)  \geq \psi
	\end{equation}
	but  $s^2 $ is distributed as $ \sigma^2\chi^2(N-1)/N $
	For any fixed $ \psi $, $ \chi^2(\psi,N)/N^2 $ is a monotone decreasing function of $ N $, and $ qt(1-\alpha/2, (N-1))  $ is a monotone decreasing function of N for any $\alpha$.
	Therefore the necessary sample size $ n_0 $ is the smallest solution of the inequality
	\begin{equation}label{eq:def dnorm4}
	\frac{t^2(1-\alpha/2, N-1)\chi^2(\psi,N-1)}{N^2} < \frac{d_0^2}{4\sigma^2}	         			
	\end{equation}
	
In Table 1 we present sample size $ (n_0) $  with corresponding coverage probability (Cov) and percent (Pow) of $CI$ with width less than the required (Width0) for postulated value of variance (named "Expected") and the variance found by the described algorithm (named "Exact"). The should be compared with the required coverage (named (1-alpha)) and the required proportion (named "Power")

\begin{landscape}
\begin{center}
Table 1.Normal distribution\vskip1cm
\begin{tabular}{ |c|c|c|c|c|c|c|c|c|c } 
		\hline
	\phantom{\Bigg(}Width0\phantom{\Bigg)} &1-alpha&Power&$n_0$Exp&CovExp&PowExp&$n_0$Exa&CovExa&PowExa\\
	\hline
	0.50000 &0.95 &0.8 &62 &0.9478 &0.4552 &73 &0.9489 &0.8135  \\
	0.50000 &0.95 &0.9 &62 &0.9491 &0.4422 &78 &0.9456 &0.9177  \\
	0.50000 &0.90 &0.8 &44 &0.9037 &0.4807 &53 &0.9001 &0.8299  \\
	0.50000 &0.90 &0.9 &44 &0.8921 &0.4791 &57 &0.9016 &0.9170  \\
	0.25000 &0.95 &0.8 &246 &0.9518 &0.4745 &267 &0.9453 &0.8144  \\
	0.25000 &0.95 &0.9 &246 &0.9526 &0.4682 &276 &0.9507 &0.9026  \\
	0.25000 &0.90 &0.8 &174 &0.8961 &0.4883 &190 &0.9036 &0.7972  \\
	0.25000 &0.90 &0.9 &174 &0.9007 &0.4871 &198 &0.9062 &0.8998  \\
	0.12500 &0.95 &0.8 &984 &0.9504 &0.4839 &1023 &0.9527 &0.8077 \\
	0.12500 &0.95 &0.9 &984 &0.9494 &0.4983 &1042 &0.9491 &0.9046 \\
	0.12500 &0.90 &0.8 &693 &0.8985 &0.4848 &725 &0.9041 &0.8022 \\
	0.12500 &0.90 &0.9 &693 &0.9007 &0.4886 &742 &0.8965 &0.9075 \\
	0.06250 &0.95 &0.8 &3934 &0.9501 &0.4899 &4010 &0.9479 &0.7986  \\
	0.06250 &0.95 &0.9 &3934 &0.9484 &0.4942 &4049 &0.9488 &0.9025  \\
	0.06250 &0.90 &0.8 &2771 &0.9055 &0.4979 &2835 &0.8974 &0.8035  \\
	0.06250 &0.90 &0.9 &2771 &0.9004 &0.4949 &2867 &0.9004 &0.9043  \\
		 \hline
\end{tabular}
\end{center}
\end{landscape}
\subsubsection{Comments}
	
	It can be seen that the “expected” sample size substantially underestimates the necessary sample size. Relative difference varies from 0.5 for power=0.9 and large width to 0.015 for power=0.8 and small width. This is due to the decrease of the ratio of standard deviation of chi-square distribution to its expectation when d.f. grows. Thus the sample size calculated for the expected value of the variance for D=1 and 1-beta=0.9 should be increased by 50\% from 16 to 24 to provide the declared probability of 0.9 to obtain the required width of 95\% CI. 
									
	\subsection{Poisson distribution.}
	Poisson distribution is a one parametric distribution on the set of all non-negative integers with probability function
	\begin{equation}label{eq:def Poiss1}
	P(n|\lambda)=\frac{\lambda^n}{n!}e^{-\lambda}	 	         			
	\end{equation}
									
	Poisson random variable mostly appears as a result of observing number of events in Poisson process. The intensity $ e $ of a process is assumed to be known and the question of “sample size” takes a form of “how long we have to observe the process to get a good interval estimation of the intensity?”. For example N is the total number of person-years of follow up and $ e $ is the incidence of a disease. The expected number of outcomes will be  $ \lambda = e*N $. Under some well-known assumptions, the observed number of events will have Poisson distribution with parameter $ \lambda $. Our goal is to find a sample size $ n_0 $, such that $ 1-\alpha $ confidence interval for the parameter $ e $ will have width $ d $ less or equal to $ d_0 $ with probability $ \psi_0 $. In a standard approach to calculating the necessary sample size, we assume some fixed value  $ e_0 $ of the parameter of interest. After obtaining a sample with size $ N $ we consider the observed number of outcomes x as an estimate of expected number of outcomes $ \lambda $ and calculate the sampling estimation of $e $ as $ x/N $. Correspondingly the width  $  d(e)=d(x)/N $. Thus we are looking for a solution of  $ n_0=min(N) $ such that 
	\begin{equation}label{eq:def Poiss2}
	P\Big(\frac{d(x,N)}{N} < d_0(e)\Big)\geq\psi 				
	\end{equation}
	for any $ N>n_0 $ .
	There are many expressions for $CI$ of the parameter x of Poisson distribution (Patil and Kulkarni \cite{Patil} considers 19).
	One of expressions, recommended by \cite{Patil} for $CI$ of the parameter $ \lambda $ of Poisson distribution is 
	 Garwood (1994)\cite{Garwood}, 
	 \begin{equation}label{eq:def Poiss3}
	 (\chi^{2}(2x,\alpha_1),\chi^{2}(2x+2,\alpha_2))
	 \end{equation}
	Usually,  $ \alpha_2=1-\alpha_1, \alpha_1=\alpha/2. $ 
	Then 
	\begin{equation}label{eq:def Poiss4}
	d(x,\alpha)=(\chi^{2}(2x+2,1-\alpha/2) - \chi^{2}(2x,\alpha/2)) 
	\end{equation}
	and correspondingly
	\begin{equation}label{eq:def Poiss5}
	d(e,\alpha)=(\chi^{2}(2x+2,1-\alpha/2) - \chi^{2}(2x,\alpha/2))/N 
	\end{equation}

	For any fixed  $ x $, $ d(e,N) $ is a monotone decreasing function of $ N $ and for any fixed  $ N $, $ d(e,N) $ is a monotone increasing function of $ x $. Therefore we are looking for a set $ R_0(\lambda,\psi) $ such that $ Prob(\lambda)(R_0)\geqq \psi $ with the lowest $ max{(x \in R_0)}$.Thus the set $ R_0(\lambda,\psi) $ is an interval $ {0,1,2,..Q(\psi,\lambda) } $
	From the definition of Poisson(m) we can see that for any $ \lambda >=1 $, the probability to observe number of events in Poisson(m) not greater than  $\lambda $ is below 0.8 (for $\lambda =1 P(x<=1)= .73575888 $ and this probability monotone decreasing with $\lambda $).
	Thus the $ Q(\psi,\lambda )>\lambda $ for $ \psi > 0.75 $ and $ m\geqq1 $ , $  d(Q(\psi,\lambda),N)>d(\lambda,N) $ leading to "expected"  sample size will be smaller than the "empiric" one.

	\subsubsection{Results}
	Table 2 has the same structure as Table 1. It presents sample size ($n_0$) with corresponding coverage probability (Cov) and percent (Pow) of $CI$ with width less than the required (Width0) for postulated value of variance (named "Expected") and the variance found by the described algorithm (named "Exact"). The should be compared with the required coverage (0.95) and the required proportion (named "Power")

\begin{landscape}
    \begin{center}
    Table 2. Poisson distribution.\vskip1cm
 	\begin{tabular}{ |c|c|c|c|c|c|c|c|c| } 
		\hline		
	\phantom{\Bigg(}Rate\phantom{\Bigg)} & Width0 & Power & $n_0$Exp & CovExp & PowExp & $n_0$Exa & CovExa & PowExa \\
	\hline
	0.01 & 0.002 & 0.8 & 39439 & 0.9532 & 0.5055 & 41064 & 0.9589 & 0.8070 \\
	0.01 & 0.002 & 0.9 & 39439 & 0.9534 & 0.5092 & 41861 & 0.9515 & 0.9045  \\
	0.01 & 0.001 & 0.8 & 155683 & 0.9479 & 0.4929 & 158936 & 0.9491 & 0.8072 \\
	0.01 & 0.001 & 0.9 & 155683 & 0.9529 & 0.4917 & 160630 & 0.9537 & 0.9050  \\
	0.02 & 0.004 & 0.8 & 19719 & 0.9538 & 0.5064 & 20532 & 0.9561 & 0.8162  \\
	0.02 & 0.004 & 0.9 & 19719 & 0.9548 & 0.4972 & 20931 & 0.9512 & 0.9042  \\
	0.02 & 0.002 & 0.8 & 77841 & 0.9562 & 0.5008 & 79468 & 0.9500 & 0.8069  \\
	0.02 & 0.002 & 0.9 & 77841 & 0.9499 & 0.5040 & 80315 & 0.9540 & 0.9047  \\
	0.04 & 0.008 & 0.8 & 9859 & 0.9521 & 0.5120 & 10266 & 0.9513 & 0.8112 \\
	0.04 & 0.008 & 0.9 & 9859 & 0.9535 & 0.5066 & 10466 & 0.9530 & 0.9014  \\
	0.04 & 0.004 & 0.8 & 38920 & 0.9498 & 0.5011 & 39734 & 0.9490 & 0.7996  \\
	0.04 & 0.004 & 0.9 & 38920 & 0.9502 & 0.4944 & 40158 & 0.9509 & 0.8971  \\
	0.08 & 0.016 & 0.8 & 4929 & 0.9534 & 0.5072 & 5133 & 0.9559 & 0.8047  \\
	0.08 & 0.016 & 0.9 & 4929 & 0.9532 & 0.5034 & 5233 & 0.9490 & 0.9037  \\
	0.08 & 0.008 & 0.8 & 19460 & 0.9532 & 0.5021 & 19867 & 0.9510 & 0.8057  \\
	0.08 & 0.008 & 0.9 & 19460 & 0.9510 & 0.4965 & 20079 & 0.9492 & 0.9075  \\
	 \hline
\end{tabular}

\end{center}
\end{landscape}
\subsubsection{Comments}
	It can be seen that the Garwood formula provides perfect coverage. However, the "expected" sample size 	is lower that the "exact" by 2\% - 5\% and provides the "expected power" around  50\%  while the "exact" sample size guarantees the requested values of power.
	If one prefers to avoid using statistical software, it is possible to use simple algebraic approximations for $ Q(\psi,\lambda) $ and $ CI $.
	The value $ Q(\psi,\lambda) $ may be approximately estimated by using direct and inverse transformation of Poisson variable to Normal using Anscombe \cite{Anscombe} approximation 
    \begin{equation}
	z=2\sqrt{(x+0.375)}~{}N(\sqrt(x0),1) 
	\end{equation}
	and recommended inverse transformation  
	\begin{equation}
	 x=\frac{z^2}{4}-0.125  
	\end{equation}
	From these we obtain
	\begin{equation}
	Q(\psi,\lambda)\approx(2\sqrt{(Ne_0 +0.375)}+Z(\psi )^2-0.125
	\end{equation}

	\subsection{Binomial distribution.} 
	Here we are interested in finding the sample size $ n_0 $ that provides the desired width $ d_0 $ for assumed proportion $ p_0 $ of the number $ x $ of outcomes 1 in a sample of size $ N $.
	There are many formula for $CI$ of the parameter $ p $ of the Binomial distribution. One of the most usable\cite{Newcombe} is the Wilson’s interval\cite{Wilson}

	\begin{equation}label{eq: Wilson1}
		d(CI)=\frac{ 2N\hat{p}+Z_{\alpha}^2 \pm Z_{\alpha}^2\sqrt{4N\hat{p}(1-\hat{p}) +Z_{\alpha}^2 }}{2N+ 2Z_{\alpha}^2} 
	\end{equation}
	
	where $ \hat{p} $ is the observed proportion $ o=x/N $ and x is the number of outcomes 1. Its width is 
	\begin{equation}label{eq: Wilson2}
	D(\hat{p},N)=\frac{ 2Z_{\alpha}^2\sqrt{4N\hat{p}(1-\hat{p}) +Z_{\alpha}^2 }}{2N+ 2Z_{\alpha}^2}  
	\end{equation}
	The maximum width corresponds to maximum $ \hat{p}*(1-\hat{p}) $. Thus for any $ N $, if $ abs(\hat{p}_1-0.5)>abs(\hat{p}_2-0.5) $ then $ d(\hat{p}_1,N)< d(\hat{p}_2,N) $. Our goal is to guarantee that $ Prob(D<d_0)\geqq \psi $ with the smallest $ N $. Following the “worst – best” scenario, we are going to find a set $ R_0 $, such that      
	\begin{equation}label{eq: Wilson3}
		Prob(R_0|p,N)\geqq \psi					
	\end{equation}	
	and 
	\begin{equation}label{eq: Wilson4}
	R_0=\argmin_R \max_{\hat{p} \in R}(abs(\hat{p}-0.5))). 
	\end{equation}
	
	Evidently, this set has a form  ${0,q}\bigcup{(1-q),1} $ where $q$ is defined from equation \ref{Wilson4}. However, for $p$ not too close to $0.5$ and reasonable sample size $N$ one of intervals has negligible probability under $Bin(p,N)$. For examples, the upper interval $\left((1-q),1\right)$ will have the probability below $0.001$ for $p_0 =0.2$ if $N>20$ , for $p_0=0.35$ if $N>40$, for $p_0=0.4$ if  $N>70$ and for $p_0= 0.45$ if $N>260$. Our simulations demonstrated that for $p<0.45$ and  $R_0$ of the form $\{0,q\}$ the probability  $P(N,q)$ will be greater than the probability $P({(1-q),1})$  for $N$ equal to necessary sample size. Therefore we will use $R_0$ of the form of one interval.
\subsubsection{Results}
Table 3 has the same structure as table 1. It presents the sample size ($n_0$) with corresponding coverage probability (Cov) and percent (Pow) of $CI$ with width less than the required (Width0) for postulated value of the parameter $p_0$ (named "Expected") and the variance found by the described algorithm (named "Exact"). The should be compared with the required coverage (0.95) and the required proportion (named "Power")
\begin{landscape}
\begin{center}
	Table 3. Binomial distribution.\vskip1cm
	
	\begin{tabular}{ |c|c|c|c|c|c|c|c|c| } 
		
		\hline
\phantom{\Bigg(}$p_0$\phantom{\Bigg)}&Width0&Power&$n_0$Exp&PowExp&CovExp&S$n_0$Exa&PowExa&CovExa\\
 \hline
0.5000 & 0.10& 0.8 & 381 & 1.0000 & 0.944 & 381 & 1.0000 & 0.947\\
0.5000 & 0.10& 0.9 & 381 & 1.0000 & 0.943 & 381 & 1.0000 & 0.948\\
0.5000 & 0.05& 0.8 & 1533 & 1.0000 & 0.951 & 1533 & 1.0000 & 0.950\\
0.5000 & 0.05& 0.9 & 1533 & 1.0000 & 0.947 & 1533 & 1.0000 & 0.949\\
0.2500 & 0.10& 0.8 & 286 & 0.5078 & 0.945 & 302 & 0.8253 & 0.955\\
0.2500 & 0.10& 0.9 & 286 & 0.5031 & 0.944 & 309 & 0.9061 & 0.957\\
0.2500 & 0.05& 0.8 & 1150 & 0.5005 & 0.951 & 1182 & 0.8134 & 0.951\\
0.2500 & 0.05& 0.9 & 1150 & 0.4961 & 0.949 & 1199 & 0.8989 & 0.952\\
0.1250 & 0.10& 0.8 & 170 & 0.5351 & 0.953 & 192 & 0.8397 & 0.938\\
0.1250 & 0.10& 0.9 & 170 & 0.5378 & 0.950 & 201 & 0.9097 & 0.957\\
0.1250 & 0.05& 0.8 & 674 & 0.5091 & 0.952 & 722 & 0.8227 & 0.952\\
0.1250 & 0.05& 0.9 & 674 & 0.5203 & 0.953 & 745 & 0.9145 & 0.949\\
0.0625 & 0.10& 0.8 & 98 & 0.5850 & 0.945 & 121 & 0.8624 & 0.944\\
0.0625 & 0.10& 0.9 & 98 & 0.5839 & 0.948 & 132 & 0.9326 & 0.953\\
0.0625 & 0.05& 0.8 & 369 & 0.5564 & 0.961 & 424 & 0.8479 & 0.958\\
0.0625 & 0.05& 0.9 & 369 & 0.5484 & 0.960 & 449 & 0.9293 & 0.952\\
 \hline
\end{tabular}
\end{center}
\end{landscape}
\subsubsection{Comments}
The first fact that we can see from table 3 is that for expected probability  $p_0=0.5$ the results strongly differ from all others. The thing it that  in our best-worst scenario the set $R_0$ always contains the value $0.5$.Therefore the sample size does not depend on power $0.8$ or $0.9$ and coincides with expected power for $p_0=0.5$. The power, i.e the proportion of $CI$ with width less than the required, is $1.0$ because the width of $CI$ for $p=0.5$ is bigger than for any other $p$. The coverage is close to nominal value of $0.95$ supporting good properties of Wilson $CI$. However, the situation is different for other values of $p_0$. The "exact" sample size is bigger than the "expected" one. The relative difference may be more than 20\% (for $p_0=0.0625$, $d_0=0.05$ and power$=0.9$). The "exact' sample size provides power not less than the required, while the power for "expected" sample size is around $0.5$.

\newpage	
\lstset{basicstyle=\small,tabsize=2}
\section{Appendix. R scripts}
\subsection{Normal distrbution}
\begin{lstlisting}
# starting value using expected variance 
# and Normal approximation
nstart <- function(a,sigma,d) {
	za <-qnorm(1-a/2)
	nst <- (2*za*sigma/d)^2
	return(ceiling(nst))
}

# estimation of sd(power) using chi-square 
# quantile and expected variance
parpow <- function ( power, n, sigma) {
	sd <- sigma*sqrt(qchisq(power, n-1))
	return(sd)
}
# width of (1-a) CI using t distribution 
width <- function( a,n,sd) {
	ta <- qt(1-a/2,n-1)
	width <- 2*ta*sd/sqrt(n)
	return(width)
}
# test power and coverage ***********************
testnorm <- function(nsim, ssize,width, sigma, alpha) {
	ns <-rep(ssize,nsim)
	myse <- sigma*sqrt(rchisq(nsim,ssize))/ssize
	mymean <- rnorm(nsim,0)*sigma/sqrt(ssize)
	ta <- qt(1-alpha/2,ssize-1)
	lo <- mymean-myse*ta
	up <- mymean+myse*ta
	cover <- mean(lo*up<0)
	small <- mean((up-lo)<width)
	res <- c(cover,small)
	return(res)
}


# sample size using chi-square and t distribution
ssnorm <- function(a, power,sigma, dmax) {
	n0  <- nstart(a, sigma, dmax)
	sd_0 <- parpow(power,n0,sigma)
	w0  <- width(a,n0,sd_0)
	step <- 2 * (w0 >dmax) -1
	w <-w0
	s0 <-n0
	while ( step*(w>dmax)>0 )  {
		qchisq(power, s0-1) -> qb 
		qt(1-a/2,s0-1) -> ta
		2*ta*sqrt(qb)/s0 -> w 
		s0+step -> s0 
	}
	ss<-ceiling(s0)
	return(ss)
}


ssnormapr <-function(a,prot,sd,dmax) {
	qnorm(1-a/2) -> za
	qnorm(prot)  -> zb
	n0  <- nstart(a, sd, dmax)
	dmax/sd -> dn 
	10*dn -> d_0 
	floor(4*(za*dn)^2) -> sexp 
	w0  <- width(a,n0,sd)
	sexp -> s0 
	step <- 2 * (w0 >dmax) -1
	while    (d_0 > dn) {
		((sqrt(2*s0)+zb)^2)/2 -> qb
		2*za * sqrt(qb)/s0 -> d_0 
		s0+step -> s0 
	}
	ss<-s0+2
}

tabnorm <- function(k,nsim)  {
	tab <-matrix(nrow=4*k,ncol=10)
	colnames(tab) <- c("Width","1-alpha", "Power",
	 					"Expected","CovExp","PowExp", "Exact", 
	 					"CovExact","PowExact", "Approximate")
	row<-0
	for (r in 1: k ) {
		width <-1/2^r
		for (a in c(0.05, 0.1)) {
			for (b in c(0.8, 0.9)) {
				row <-row+1
				tab[row,1] <-width 
				tab[row,2] <- 1-a
				tab[row,3] <- b
				tab[row,4] <- nstart(a, 1, width)
				tab[row,5] <- testnorm(nsim,tab[row,4],width,1,a)[1]
				tab[row,6] <- testnorm(nsim,tab[row,4],width,1,a)[2]
				tab[row,7] <- ssnorm(a, b, 1, width)
				tab[row,8] <- testnorm(nsim,tab[row,7],width,1,a)[1]
				tab[row,9] <- testnorm(nsim,tab[row,7],width,1,a)[2]
				tab[row,10] <- ssnormapr(a, b, 1, width)
			}
		}
	} 
	return(tab)
}

z<-tabnorm(5,10000)
z

\end{lstlisting}

\subsection{Poisson distribution}
\begin{lstlisting}
exactPoiCI <- function (x, alpha) {
	upper <- 0.5 * qchisq((1-(alpha/2)), (2*x+2))
	lower <- 0.5 * qchisq(alpha/2, (2*x ))
	return(c(lower,upper))
}

qpoisson <- function(p, lambda) {
	qn <- qpois(p, lambda, lower.tail = TRUE, log.p = FALSE)
}

ssexpectpoi <-function (rate, d_0, alpha) {
	startss <-ceiling(16*rate/d_0^2)
	jump <- ceiling(startss/10)
	step<-1
	dw0 <- 10*d_0
	s      <- startss
	lambda <-rate*s
	if (dw0>d_0) {
		while (dw0 > d_0)  {
			s <- s +jump
			lambda <- s*rate
			dw0 <- (exactPoiCI(lambda,alpha)[2] 
					- exactPoiCI(lambda,alpha)[1])/s
		}
	}
	while (dw0 <= d_0)  {
		s <- s -step
		lambda <- s*rate
		dw0 <-  (exactPoiCI(lambda,alpha)[2] 
					- exactPoiCI(lambda,alpha)[1])/s
	}
	return(s)
}

ssexactpoi <-function (rate, d_0, alpha, power) {
	startss <-ceiling(16*rate/d_0^2)
	jump <- ceiling(startss/10)
	step<-1
	dw0 <- 10*d_0
	s      <- startss
	while (dw0 > d_0)  {
		s <- s +jump
		lambda <- s*rate
		nbeta <- qpois(power, lambda)
		dw0 <-  (exactPoiCI(nbeta,alpha)[2] 
				- exactPoiCI(nbeta,alpha)[1])/s
	}
	s <- s +jump
	lambda <- s*rate
	nbeta <- qpois(power, lambda)
	dw0 <-  (exactPoiCI(nbeta,alpha)[2] 
			- exactPoiCI(nbeta,alpha)[1])/s
	while (dw0 <= d_0)  {
		s <- s -step
		lambda <- s*rate
		nbeta <- qpois(power, lambda)
		dw0 <-  (exactPoiCI(nbeta,alpha)[2] 
				- exactPoiCI(nbeta,alpha)[1])/s
	}
	s <- s +step
	lambda <- s*rate
	nbeta <- qpois(power, lambda)
	dw0 <-  exactPoiCI(nbeta,alpha)/s
	return(s)
} 

testpoi<- function(nsim, rate,d_0,nsam,alpha) {
	lambda <- rate*nsam
	x <-    rpois(nsim, lambda)
	n <- rep(nsam, nsim)
	ci0 <- exactPoiCI(x,alpha)/n
	res1 <- matrix(ci0,ncol=2)
	pcover <- mean(res1[,2]>rate & res1[,1]<rate)
	len <-  res1[,2]-res1[,1]
	plen <- mean(len<d_0)
	res <- c(plen,pcover)
	return(res)
}

tabpoi <- function(k,nsim)  {
	tab <-matrix(nrow=8*k,ncol=10)
	colnames(tab) <- c("Rate","Width","1-alpha", "Power",
						"SSExpected","CoverExp","PowExp", "Exact",
	 					"CoverExact","PowExact")
	row<-0
	for (r in 1: k ) {
		rate <- 0.005*2^r
		for (width in c(0.2*rate, 0.1*rate)) {
			for (a in c(0.05, 0.1)) {
				for (b in c(0.8, 0.9))  {
					row <-row+1
					tab[row,1] <- rate
					tab[row,2] <- width 
					tab[row,3] <- 1-a
					tab[row,4] <- b
					tab[row,5] <- ssexpectpoi(rate, width,a)
					tab[row,6] <- testpoi (nsim,rate,width, tab[row,5],a)[2]
					tab[row,7] <- testpoi (nsim,rate,width, tab[row,5],a)[1]
					nw<-ssexactpoi (rate, width, a, b) 
					tab[row,8] <- nw       
					tab[row,9] <- testpoi (nsim,rate,width, tab[row,8],a)[2]
					tab[row,10] <- testpoi (nsim,rate,width, tab[row,8],a)[1]
				}
			}
		}
	} 
	return(tab)
}

z<-tabpoi(4,10000)
z
\end{lstlisting}

\subsection{Binomial distribution}
\begin{lstlisting}
library(binom)
ssexpect <-function (p0, d_0, alpha, jump, step) { 
	dwils0 <- 10*d_0 
	za2    <- qnorm(1-alpha/2) 
	s = ceiling(za2^2*p0*(1-p0)/d_0/d_0) 
	minp0  <- min(p0,1-p0) 
	while (dwils0 > d_0)  { 
		s <- s +jump 
		dwils0 <- 2 * za2 * sqrt(minp0*(1-minp0)*s
					+ (za2^2)/4)/(s*(1+ za2^2/s)) 
	} 
	s <- s +jump 
	dwils0 <- d_0/2 
	while (dwils0 < d_0)  { 
		s <- s -step 
		dwils0 <- 2 * za2 * sqrt(minp0*(1-minp0)*s
					+ (za2^2)/4)/(s*(1+ za2^2/s)) 
	} 
	return(s+1) 
}  

sswilson <-function (p0, d_0, alpha, power, startss, jump, step) { 
	dwils0 <- 10*d_0
	za2    <- qnorm(1-alpha/2)
	zb     <- qnorm(power)
	s      <- startss
	minp0  <- min(p0,1-p0)
	while (dwils0 > d_0)  {
		s <- s +jump
		qbeta <- min(0.5,qbinom(power, s, minp0)/s)
		dwils0 <- 2 * za2 * sqrt(qbeta*(1-qbeta)*s
					+ (za2^2)/4)/(s*(1+ za2^2/s))
	}
	s <- s +jump
	dwils0 <- d_0/2
	while (dwils0 < d_0)  {
		s <- s -step
		qbeta <- min(0.5, qbinom(power, s, minp0)/s)
		dwils0 <- 2 * za2 * sqrt(qbeta*(1-qbeta)*s
					+ (za2^2)/4)/(s*(1+ za2^2/s))
	}
	return(s+1)
} 

testbin<- function(nsim, p0,d_0,nsam,alpha) { 
	x <-    rbinom(nsim, nsam, p0)
	n <- rep(nsam, nsim)
	ci0 <-  binom.confint(x, n, conf.level = 1-alpha,methods = "wilson")
	len <-   ci0$upper-ci0$lower
	plen <- mean(len<d_0)
	pcover <- mean(ci0$upper>p0 & ci0$lower<p0)
	res <- c(plen,pcover)
	return(res)
} 

tabbin <- function(k,nsim){ 
	tab <-matrix(nrow=8*k,ncol=10)
	colnames(tab) <- c("Probability","Width","1-alpha", "Power",
						"Expected","CoverExp","PowExp", "Exact",
						"CoverExact","PowExact")
	row<-0
	for (r in 1: k ){
		p0 <- 1/2^r
		for (width in c(0.1, 0.05)){
			for (a in c(0.05, 0.1)){
				for (b in c(0.8, 0.9)){
					row <-row+1
					tab[row,1] <- p0 
					tab[row,2] <- width 
					tab[row,3] <- 1-a
					tab[row,4] <- b
					tab[row,5] <- ssexpect(p0, width,a,10,1)
					tab[row,6] <- testbin (nsim,p0,width, tab[row,5],a)[1] 
					tab[row,7] <- testbin (nsim,p0,width, tab[row,5],a)[2] 
					nw<-sswilson (p0, width, a, b, 5, 10,1) 
					tab[row,8] <- nw       
					tab[row,9] <- testbin (nsim,p0,width, tab[row,8],a)[1] 
					tab[row,10] <- testbin (nsim,p0,width, tab[row,8],a)[2] 
				}
			}
	    }
	} 
 return(tab)
} 

z<-tabbin(2,10000)
z

\end{lstlisting}
\end{document}